\let\@twosidetrue\@twosidefalse \let\@mparswitchtrue\@mparswitchfalse \makeatother
 \newtheorem{theorem}{Theorem}
 \newtheorem{lemma}[theorem]{Lemma}
 \newtheorem{corollary}[theorem]{Corollary}
 \newtheorem{observation}[theorem]{Observation}
\newcommand{\bigO}{\mathcal{O}}
\title{Ants: Mobile Finite State Machines}
\author{
Yuval Emek\thanks{ETH Zurich, Switzerland}
\and
Tobias Langner\thanks{ETH Zurich, Switzerland}
\and
Jara Uitto\thanks{ETH Zurich, Switzerland}
\and
Roger Wattenhofer\thanks{Microsoft Research, Redmond, WA and ETH Zurich,
Switzerland}
}
\date{}
\newcommand\atype[1]{\textsf{#1}\xspace}
\newcommand\guide{\atype{Guide}}
\newcommand\guides{\atype{Guides}}
\newcommand\mguide{\atype{MovingGuide}}
\newcommand\mguides{\atype{MovingGuides}}
\newcommand\nguide{\atype{NewGuide}}
\newcommand\nguides{\atype{NewGuides}}
\newcommand\explorer{\atype{Explorer}}
\newcommand\explorers{\atype{Explorers}}
\newcommand\nexplorer{\atype{NewExplorer}}
\newcommand\nexplorers{\atype{NewExplorers}}
\newcommand\mexplorer{\atype{MovingExplorer}}
\newcommand\mexplorers{\atype{MovingExplorers}}
\newcommand\stime{t^s}
\newcommand\ftime{t^f}
\newcommand\RectSearch{\textsf{RectSearch}\xspace}
\newcommand\GeomSearch{\textsf{GeomSearch}\xspace}
\newcommand\HybridSearch{\textsf{HybridSearch}\xspace}
\newcommand\PSTA{\textsf{ParallelTeamAssignment}\xspace}
\def\whp{w.h.p\@ifnextchar.{}{.\ }}
\newcommand\FS{\textsf{FastSpread}\xspace}
\begin{document}

\begin{titlepage}

	\maketitle

	\begin{abstract}
		Consider the \emph{Ants Nearby Treasure Search (ANTS)} problem introduced by
		Feinerman, Korman, Lotker, and Sereni (PODC 2012), where $n$ mobile
		agents, initially placed at the origin of an infinite grid, collaboratively
		search for an adversarially hidden treasure.
		In this paper, the model of Feinerman et al.\ is adapted such that the agents
		are controlled by a (randomized) finite state machine: they possess a
		constant-size memory and are able to communicate with each other through
		constant-size messages.
		Despite the restriction to constant-size memory, we show that their
		collaborative performance remains the same by presenting a distributed
		algorithm that matches a lower bound established by Feinerman et al.\ on
		the run-time of any ANTS algorithm.
	\end{abstract}

	\subsection*{Contact Author Information}
	\noindent Tobias Langner \\
	ETH Zurich \\
	Gloriastrasse 35, ETZ G61.4 \\
	8092 Zurich \\
	Switzerland \medskip
	
	\noindent phone: +41 44 63 27730 \\
	\noindent email: \texttt{tobias.langner@tik.ee.ethz.ch}
\end{titlepage}

\section{Introduction}

``They operate without any central control.
Their collective behavior arises from local interactions.''
The last quote is arguably the mantra of distributed computing, however, in
this case, ``they'' are not nodes in a distributed system;
rather, this quote is taken from a biology paper that studies social insect
colonies~\cite{PDG12}.
Understanding the behavior of insect colonies from a distributed computing
perspective will hopefully prove to be a big step for both
disciplines.

In this paper, we study the process of food finding and gathering by ant
colonies from a distributed computing point of view.
Inspired by the model of Feinerman et al.~\cite{Feinerman2012}, we
consider a colony of $n$ ants whose nest is located at the origin of an
infinite grid that collaboratively search for an adversarially hidden food
source.
An ant can move between neighboring grid cells in a single time unit and
can communicate with the ants that share the same grid cell. However, the ant's
navigation and communication capabilities are very limited since its actions
are controlled by a randomized \emph{finite state machine} (FSM) --- refer to
Section~\ref{section:Model} for a formal definition of our model.
Nevertheless, we design a distributed algorithm ensuring that the ants locate
the food source within $\bigO(D + D^2 / n)$ time units \whp,
where $D$ denotes the distance between the food source and the nest.\footnote{
We say that an event occurs \emph{with high probability}, abbreviated by \whp,
if the event occurs with probability at least $1 - n^{-c}$, where $c$ is an
arbitrarily large constant.
}
It is not difficult to show that a matching lower bound holds even under the
assumptions that the ants have unbounded memory (i.e., are controlled by a Turing
machine) and that the ants know the parameter $n$.

\subsection{Related Work}

Our work is strongly inspired by Feinerman et
al.~\cite{Feinerman2012,Feinerman2012disc} who introduce the
aforementioned problem called \emph{ants nearby treasure search (ANTS)} and
study it assuming that the ants (a.k.a.\ \emph{agents}) are controlled by a
Turing machine (with or without space bounds) and that communication is
allowed only in the nest.
They show that if the $n$ agents know a constant approximation of $n$, then
they can find the food source (a.k.a.\ \emph{treasure}) in time $\bigO(D + D^2
/ n)$.
Moreover, Feinerman et al.\ observe a matching lower bound and prove that this
lower bound cannot be matched without some knowledge of $n$.
In contrast to the model studied in
\cite{Feinerman2012,Feinerman2012disc}, the agents in our model can
communicate anywhere on the grid as long as they share the same grid cell.
However, due to their weak control unit (a FSM), their
communication capabilities are very limited even when they do share the same
grid cell (see Section~\ref{section:Model}).
Notice that the stronger computational model assumed by Feinerman et al.\
enables an individual agent in their setting to perform tasks way
beyond the capabilities of a (single) agent in our setting, e.g., list the grid
cells it has already visited or perform spiral searches (that play a major role
in their upper bound).

Distributed computing by finite state machines has been studied in several
different contexts including \emph{population protocols}
\cite{AngluinADFP2006,AspnesR2009} and the recent work of Emek and Wattenhofer
\cite{Emek2013} from which we borrowed the agents communication model (see
Section~\ref{section:Model}).
In that regard, the line of work closest to our paper is probably the one
studying graph exploration by FSM controlled agents, see, e.g.,
\cite{FraigniaudIPPP2005}.

Graph exploration in general is a fundamental problem in computer science.
In the typical case, the goal is for a single agent to visit all nodes in a
given graph.
As an example, the exploration of trees was studied in~\cite{Diks2004}, the
exploration of finite undirected graphs was studied
in~\cite{Panaite1998,Reingold2008}, and the exploration of strongly connected
digraphs was studied in~\cite{Deng1999,Albers2000}.
When a deterministic agent is exploring a graph, memory usage becomes an
issue.
With randomized agents, it is well-known that random walks allow a single
agent to visit all nodes of a finite undirected graph in polynomial time
\cite{Aleliunas1979}.
The speed up gained from using multiple random walks was studied by Alon et
al.~\cite{Alon2008}.
Notice that in an infinite grid, the expected time it takes for a random walk
to reach any designated cell is infinite.

Finding treasures in unknown locations has been previously studied, for example,
in the context of the classic \emph{cow-path} problem.
In the typical setup, the goal is to locate a treasure on a line as quickly as
possible and the performance is measured as a function of the distance to the
treasure.
It has been shown that there is a deterministic algorithm with a competitive
ratio $9$ and that a spiral search algorithm is close to optimal in the
2-dimensional case~\cite{Baeza-Yates1993}.
The study of the cow-path problem was extended to the case of multiple agents
by L\'{o}pez-Ortiz and Sweet~\cite{Lopez-Ortiz2001}.
In their study, the agents are assumed to have unique identifiers, whereas our
agents cannot be distinguished from each other (at least not at the beginning
of the execution).

\subsection{Model}
\label{section:Model}

We consider a variant of \cite{Feinerman2012}'s ANTS problem, where $n$
mobile \emph{agents} are searching for an adversarially hidden \emph{treasure}
in the infinite grid $\mathbb{Z}^{2}$.
The execution progresses in synchronous \emph{rounds}, where each round lasts
$1$ time unit.
At time $0$ (the beginning of the execution), all agents are positioned
in a designated grid cell referred to as the \emph{origin} (say, the cell with
coordinates $(0, 0) \in \mathbb{Z}^{2}$).
We assume that the agents can distinguish between the origin and the other
cells.

The main difference between our variation of the ANTS model and the original
one lies in the agents' computation and communication capabilities.
In both variants, all agents run the same (randomized) protocol, however, 
under the model considered in the current paper, the agents are controlled by
a randomized \emph{finite state machine (FSM)}.
This means that the individual agent has a constant memory and thus, in general,
cannot store its coordinates in $\mathbb{Z}^{2}$.
On the other hand, in contrast to the model considered in
\cite{Feinerman2012}, the communication of our agents is not restricted to
the origin.
Specifically, under our model, an agent $a$ positioned in cell $c \in
\mathbb{Z}^{2}$ can communicate with all other agents positioned in cell $c$ at
the same time.
This communication is quite limited though:
agent $a$ merely senses for each state $q$ of the finite state machine, whether
there exists at least one agent $a' \neq a$ in cell $c$ whose current state is
$q$.
Notice that this communication scheme is a special case of the one-two-many
communication scheme introduced in \cite{Emek2013} with bounding parameter $b
= 1$.

The \emph{distance} between two grid cells $(x, y), (x', y') \in
\mathbb{Z}^{2}$ is defined with respect to the $\ell_{1}$ norm (a.k.a.\
Manhattan distance), that is, $|x - x'| + |y - y'|$.
The two cells are called \emph{neighbors} if the distance between them is $1$.
In each round of the execution, agent $a$ positioned in cell $(x, y) \in
\mathbb{Z}^{2}$ can either move to one of the $4$ neighboring cells $(x,
y + 1), (x, y - 1), (x + 1, y), (x - 1, y)$, or stay put in cell $(x, y)$.
The former $4$ \emph{position transitions} are denoted by the corresponding
cardinal directions $N, S, E, W$, whereas the latter (stationary) position
transition is denoted by $P$ (standing for ``stay put'').

Formally, the agents' protocol is captured by the $3$-tuple
\[
\Pi = \left\langle Q, s_{0}, \delta \right\rangle \, ,
\]
where
$Q$ is the finite set of \emph{states};
$s_{0} \in Q$ is the \emph{initial state}; and
$\delta : Q \times 2^{Q} \rightarrow 2^{Q \times \{ N, S, E, W, P \}}$
is the \emph{transition function}.
At time $0$, all agents are in state $s_{0}$ and positioned in the origin.
Assuming that at time $t \geq 0$, agent $a$ is in state $q \in Q$ and
positioned in cell $c \in \mathbb{Z}^{2}$, the state $q' \in Q$ of $a$ at time
$t + 1$ and its position transition $\tau \in \{ N, S, E, W, P \}$ are chosen
uniformly at random from pairs
$(q', \tau) \in \delta(q, Q_{a})$,
where $Q_{a} \subseteq Q$ contains state $p \in Q$ if and only if there exists
some (at least one) agent $a' \neq a$ such that at time $t$, agent $a'$ is in
state $p$ and positioned in cell $c$.

The goal of the agents is to locate an adversarially hidden
\emph{treasure}, i.e., to bring at least one agent to the cell in which the
treasure is positioned.
The performance measure of our algorithms is their run-time expressed in terms
of the number $n$ of agents and the distance $D$ between the origin and the
treasure.
It is important to point out that neither of the two parameters $n$ and $D$ is
known to the agents (who cannot even store them in their very limited memory).

\section{Parallel Rectangle Search}
\label{sec:RectSearch}

In this section, we introduce the collaborative search strategy \RectSearch
that depends on an \emph{emission scheme}, which divides all agents in the
origin into \emph{search teams} of size five and emits these teams
continuously from the origin until all search teams have been emitted.
We delay the description of our emission scheme until Section~\ref{sec:almost
optimal emission scheme} and describe for now the general search strategy
(without a concrete emission scheme).
Whenever a team is emitted, four agents become \emph{guides} --- one for each cardinal direction --- and the fifth one becomes an \emph{explorer}. Now each guide walks into its respective direction until it hits the first cell that has not been occupied by a guide before (this might later in the execution require to first traverse empty cells until the block of active guides is found and then walk to the end of this block). The explorer follows the north guide and when they hit the first not yet covered cell $(0, d) \in \mathbb Z^2$ for some $d > 0$, the explorer starts a rectangle search by first walking south-west towards the west guide. When it hits a guide, the explorer changes its direction to south-east, then to north-east, and finally to north-west. This way, the explorer traverses all cells in distance 
$d$ from the origin (and in passing also almost all cells in distance $d+1$). 

Whenever the explorer meets a guide on its way, the respective guide moves further outwards to the next empty cell --- hopping over all the other guides on its way if any --- and waits there for the explorer's next appearance. When the explorer has finished its rectangle by reaching the north guide again, it moves north --- together with the north guide --- to the first empty cell and starts another rectangle search there. All other guides have as well reached their target positions in the same distance from the origin and a new search can begin. Figure~\ref{fig:rectangle search} gives an illustration of the process for a single search team. We will now describe the individual aspects of the \RectSearch algorithm in a more precise and formal way.

\begin{figure}[htb]
	\centering
	\includegraphics[scale=0.9]{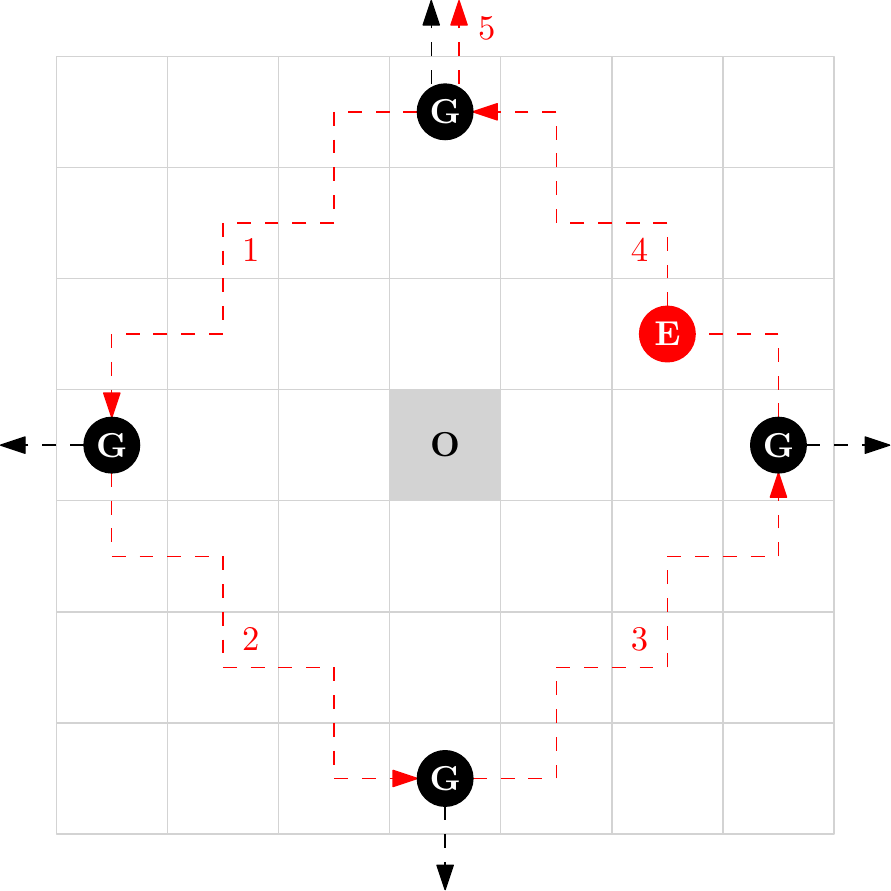}
	\caption{An explorer (E) starts a rectangle search in distance $d$ (here $d = 3$) from the origin $O$ at the north guide, visits all guides (G) in distance $d$ in a counter-clockwise fashion and ends at the north guide. Whenever the explorer meets a guide, the guide moves outwards in its cardinal direction. When the explorer completes its search by arriving again at the north guide, both agents walk outwards together to the next distance to be searched. (The red numbers indicate the order in which the explorer moves.)}
	\label{fig:rectangle search}
\end{figure}

\paragraph{Emission Scheme.}
Initially, all  $n$ agents are located at the origin. Until all agents have been emitted, an \emph{emission scheme} emits new search teams consisting of five agents from the origin satisfying the following two properties: (i)~no two search teams are emitted at the same time $t$ and (ii)~until all search teams have been emitted, the number of search teams emitted until time $t$ is $f_n(t)$ for some \emph{emission function} $f_n$.

Whenever a search team is ready, four of the five agents become \nguides\ --- 
one for each cardinal direction --- and walk outwards in their corresponding
directions, while the fifth one becomes a \nexplorer and follows the north
\nguide (see below for a detailed description of the agent types). 

\paragraph{Agent Types.} In the rest of the paper, we will refer to several different types of agents. Since there is only a constant number of different types, these can be modelled by having individual finite automata for the various types. We use six different types and explain their specific behavior in the following: \guide, \nguide, \mguide, \explorer, \nexplorer, and \mexplorer. Since agents of the three \guide-types are associated with a cardinal direction, we will use the term ``outwards'' to indicate their respective direction. We subsume the types \explorer, \nexplorer, and \mexplorer under the name \emph{exploring agents}.

\paragraph{\nguide.}
A \nguide moves outwards until it hits the first cell containing a \guide. From then on, it continues outwards until it hits a cell that contains neither a \guide nor a \mguide, and stops in this cell, becomes \guide and waits for an \explorer to visit. The \nguides of the first search team stop on the first cell outwards from the origin.

\paragraph{\nexplorer.}
A \nexplorer acts like a \nguide unless that when it hits a cell that contains neither a \guide nor an \mguide, it moves one cell west and becomes an \explorer.

\paragraph{\guide.}
A \guide remains dormant unless its cell is visited by an \explorer whereafter it moves a cell outwards and becomes a \mguide.

\paragraph{\mguide.}
A \mguide moves outwards until it hits a cell that does not contain a \guide. Then it stops there, becomes a \guide again, and waits for an \explorer to visit.

\paragraph{\mexplorer.}
A \mexplorer initially moves north along with ``its'' \mguide. It does so until it hits the first cell that does not contain a \guide. Then it moves one cell west and becomes an \explorer.

\paragraph{\explorer.}
An \explorer does the bulk of the actual search process by moving along the sides of a rectangle using \guides on its way to change direction. Initially, it is positioned one cell west of a north \guide. It moves diagonally south-west by alternatingly moving one field south and one field west until it hits a cell containing the west \guide whereafter it changes its movement direction to south-east, north-east, and finally north-west when hitting the respective \guides. When the \explorer arrives at the north \guide after it has completed the movement along the rectangle, it moves one cell north (in parallel with the north \mguide) and becomes a \mexplorer. \bigskip

\subsection{Analysis}

We denote by \emph{level} $d$ the set of all cells in distance $d$ from the origin. We say that a cell in distance $d$ is \emph{explored} when it is visited by an \explorer exploring level $d$. An \explorer is said to \emph{start} a rectangle search in level $d$ at time $t$ if it moves west from the cell $(0, d)$ (containing the north \guide) at time $t$. It \emph{finishes} a rectangle search in level $d$ at time $t$ if it moves onto cell $(0, d)$ (containing the north \guide) from the east at time $t$.
The \emph{start time} $\stime_d$ and \emph{finish time} $\ftime_d$ of a level
$d$ are given by the time when an \explorer starts or finishes a rectangle
search in level $d$, respectively.
An \explorer \emph{explores distance} $d$ or performs a \emph{rectangle search in level} $d$ at time $t$, if it has started a rectangle search in level $d$ at time $t' < t$ and has not yet finished it at time $t$. 

The design of \RectSearch ensures that regardless of which emission
scheme is used, the subset of \guides in every cardinal direction occupy a
contiguous segment of cells.
It also ensures the following two observations.

\begin{observation}
	\label{obs:search time}
  For any level $d > 0$, $\ftime_{d} - \stime_{d} = 8d$.
\end{observation}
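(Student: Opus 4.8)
The plan is to decompose the level-$d$ rectangle search into its four straight legs and to check that the \explorer spends exactly $2d$ rounds on each. First I would pin down the four turning cells. The prose preceding the statement records that all four \guides of a level reach ``their target positions in the same distance from the origin,'' and the design invariant guarantees that the \guides in each cardinal direction form a contiguous segment; hence, while level $d$ is being searched, the \guides the \explorer turns at occupy the cardinal cells $(0,d)$, $(-d,0)$, $(0,-d)$, and $(d,0)$ of the level-$d$ diamond $\{(x,y) : |x|+|y| = d\}$. By the definition of $\stime_d$ the \explorer begins at the north \guide $(0,d)$; it then proceeds south-west to $(-d,0)$, south-east to $(0,-d)$, north-east to $(d,0)$, and north-west back to $(0,d)$, entering it from the east, which is exactly the event defining $\ftime_d$.

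Second, I would bound each leg. Any two consecutive cardinal cells are at $\ell_1$-distance exactly $2d$ (for instance $|0-(-d)| + |d-0| = 2d$), and on a given leg the \explorer advances monotonically in both relevant coordinates, alternating single-cell moves in the two directions (e.g.\ alternating $S$ and $W$ on the first leg). Each leg is therefore a shortest monotone path between consecutive turning cells and consists of precisely $2d$ unit position transitions. Summing over the four legs yields $4 \cdot 2d = 8d$ transitions in total; since the shared corner cells are each counted once (they are the last cell of one leg and the first of the next), the \explorer returns to $(0,d)$ after exactly $8d$ moves, so $\ftime_d - \stime_d = 8d$.

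The step needing the most care is upgrading this from a lower bound to an equality, i.e.\ ruling out idle rounds: the count $8d$ equals the elapsed time only if the \explorer performs a genuine move in \emph{every} round between $\stime_d$ and $\ftime_d$ and is never forced to stay put. Here I would invoke the movement rules directly --- whenever the \explorer reaches a \guide it changes direction while that \guide simultaneously steps one cell further outwards, so the cell the \explorer moves into next is always free and no waiting ever occurs. A secondary nuisance is the off-by-one bookkeeping at the corners and at the initial westward step out of $(0,d)$; I would dispose of these by tabulating the coordinates along a single leg explicitly and then extending to the other three legs by the four-fold rotational symmetry of the path, confirming both the per-leg count of $2d$ and that the final move enters $(0,d)$ from the east as required by the definition of $\ftime_d$.
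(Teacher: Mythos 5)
Your proposal is correct, but note that the paper itself offers no proof of this observation at all: it is stated as a direct consequence of the design of \RectSearch (``The design of \RectSearch ensures \dots''), with the guide-contiguity invariant likewise asserted rather than proved. Your four-leg decomposition --- turning cells at $(0,d)$, $(-d,0)$, $(0,-d)$, $(d,0)$, each leg a monotone staircase of exactly $2d$ unit moves, summing to $8d$ --- is precisely the computation the authors left implicit, and your handling of the off-by-one at the initial westward step (the first leg is that step plus $2d-1$ alternating south/west moves) is the right bookkeeping. One point deserves correction, though it does not break the argument: your justification for the absence of idle rounds, namely that ``the cell the \explorer moves into next is always free,'' misreads the model --- agents in this model are never blocked by occupied cells (any number of agents may share a cell), and the \explorer moves in every round simply because its transition function always prescribes a position transition during a rectangle search. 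The genuine risk is not waiting but \emph{overshooting}: if a \guide had not yet reached its corner when the \explorer arrives, the \explorer would continue past the turning cell rather than pause. Ruling this out requires exactly the design invariant you invoke (the \guides of the level are in place before the search of that level starts, which follows since each \guide reaches its next corner one round after the \explorer visits it, long before the \explorer returns), so your proof stands at the same level of rigor as the paper's own treatment once that sentence is rephrased.
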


\begin{observation}
  \label{obs:start times}
  $\stime_{d} - \stime_{d'} \geq d - d'$ for every $d \geq d'$.
\end{observation}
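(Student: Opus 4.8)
The plan is to recast the statement in the equivalent form that the map $d \mapsto \stime_d - d$ is non-decreasing: the inequality $\stime_d - \stime_{d'} \ge d - d'$ for all $d \ge d'$ is literally $\stime_d - d \ge \stime_{d'} - d'$. Hence it suffices to prove the one-step bound $\stime_d \ge \stime_{d-1} + 1$ for every $d \ge 1$ and then telescope, since $\stime_d - \stime_{d'} = \sum_{k=d'+1}^{d} (\stime_k - \stime_{k-1}) \ge d - d'$. Because all start times are integers, it is in fact enough to show that the level start times are strictly increasing.

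To prove the one-step bound I would track the \emph{outer frontier} of the north \guides, i.e., the largest $k$ such that cell $(0,k)$ is occupied by a north \guide; by the contiguity of the north-\guide segment established for \RectSearch, this segment has the form $(0,a),\dots,(0,k)$. The key structural fact I would establish is that a level-$d$ search starts exactly when an exploring agent first reaches the frontier cell $(0,d)$: both a \nexplorer joining from a freshly emitted team and a \mexplorer relocating after finishing an inner level travel north along the $y$-axis and convert into an \explorer at the first cell beyond the current \guide segment, which is precisely $(0,d)$. Consequently, the exploring agent that starts level $d$ occupies $(0,d-1)$ in the round before, i.e., at time $\stime_d - 1$, and at that moment the frontier is still $d-1$ because $(0,d)$ is not yet occupied. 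Therefore level $d-1$ has already been started by time $\stime_d - 1$, which yields $\stime_{d-1} \le \stime_d - 1$, exactly the one-step bound.

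The main obstacle is making the structural fact rigorous, namely that the level starts are in one-to-one correspondence with \emph{unit} advances of the frontier and occur in increasing order of $d$. Two points need care. First, I must rule out that the frontier is pushed outward by a waiting \guide while the matching \explorer arrives only much later; this is where I would invoke that each emitted team delivers its \nexplorer together with its \nguides to the frontier, and that a relocating \mexplorer always moves to the outermost empty cell, so that every extension of the frontier coincides with a search start. Second, I must ensure that no two starts occur in the same round and that the frontier never jumps by more than one cell per round, both of which follow from agents moving at most one cell per round together with the contiguity of the \guide segment. Note that Observation~\ref{obs:search time} is not needed for this line of argument.

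Should the clean correspondence prove delicate, a safe fallback is to follow a single exploring agent from the inner level $\ell$ it last finished up to the level $d$ it next searches: it relocates straight north, so $\stime_d = \ftime_\ell + (d-\ell) = \stime_\ell + 7\ell + d$ by Observation~\ref{obs:search time}, giving a gap vastly larger than $d-\ell$ for the levels handled by one agent. This only directly controls the levels of one agent, so combining it across agents still requires the same ordering fact; hence I expect the frontier argument above to be the cleaner route, with the ordering/no-collision claim as the genuine crux.
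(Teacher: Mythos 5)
Your proposal is correct, and it runs on the same engine as the paper's proof, but the decomposition is genuinely different. The paper argues directly for an arbitrary pair $d \geq d'$: letting $e$ be the agent that explores level $d$ and $e' \neq e$ the one exploring level $d'$, it notes that when $e$ passes through cell $(0,d')$ on its way north, level $d'$ must already have been started --- otherwise $e$ itself would have stopped there and explored level $d'$ --- and from that cell $e$ still needs at least $d-d'$ rounds to reach $(0,d)$. You instead localize to consecutive levels, prove $\stime_d \geq \stime_{d-1}+1$, and telescope, replacing the paper's one-sentence contradiction (``otherwise $e$ would have explored this level'') with a guide-frontier invariant: the north frontier advances by one cell exactly when a search starts. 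These are the same structural fact about \RectSearch seen from two sides --- the exploring agent's stopping rule versus the \guides' placement rule --- and both arguments ultimately rest on contiguity of the north \guide segment and the unit speed of exploring agents. What the paper's phrasing buys is brevity: by quantifying over the two agents it never needs the frontier as an explicit object, nor the claims that it advances at most one cell per round and that no two starts coincide, which you correctly identify as the crux of your route. What your phrasing buys is that the travel-time accounting becomes trivial (one round per level) and the case in which the same agent explores both levels needs no separate treatment, whereas the paper must dispose of $e = e'$ separately. Your fallback (following a single agent across the levels it handles) is essentially the paper's $e=e'$ case, and, as you observe, it cannot be closed without the cross-agent ordering fact that the main argument supplies.
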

\begin{proof}
Let $e$ and $e'$ be the agents exploring levels $d$ and $d'$, respectively.
If $e = e'$, then the assertion holds trivially, so assume that $e \neq e'$.
When agent $e$ arrives to cell $(0, d')$ as a \mexplorer at some time $t'$,
agent $e'$ must have already started exploring level $d'$ as otherwise, agent
$e$ would have explored this level;
that is, $t' > \stime_{d'}$.
It will take agent $e$ at least $d - d'$ more time units to get to cell $(0,
d)$ and start exploring level $d$.
\end{proof}

\noindent
We are now ready to prove the following lemma essential for the
correctness of our algorithm.

\begin{lemma}
	\label{lem:same state different cell}
Outside the origin, no two agents of the same type occupy the same cell at
the same time.
\end{lemma}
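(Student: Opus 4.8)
The plan is to argue by induction on time $t$, maintaining the stated property (and, as part of the same invariant, the well-definedness of $\stime_d,\ftime_d$) for all six types simultaneously, and to exploit the rigid kinematics of \RectSearch. First I would dispense with the easy reductions. Since agents of types \guide, \nguide, and \mguide each carry a fixed cardinal direction, and the four axis rays $\{(0,y):y>0\}$, $\{(0,y):y<0\}$, $\{(x,0):x>0\}$, $\{(x,0):x<0\}$ are pairwise disjoint outside the origin, two same-type guide agents can collide only if they share a direction; by symmetry it suffices to treat the north ray. The exploring agents spend their \nexplorer and \mexplorer phases on the north ray as well, so the same reduction applies to those phases.

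The main workhorse is an \emph{offset} argument for the four types that advance one cell outward every round, namely \nguide, \mguide, \nexplorer, and \mexplorer. While an agent is in such a type its distance from the origin increases by exactly $1$ each round, so $(\text{distance})-(\text{time})$ is invariant throughout that phase. Two agents of the same advancing type therefore share a cell only if they entered the type with the same offset, so it suffices to show that distinct agents enter each advancing type with distinct offsets. For a \nguide or \nexplorer this is immediate: the agent enters the type at the origin at its team's emission time, so its offset is the negative of that time, and emission times are pairwise distinct by property~(i) of the emission scheme.

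For \mguide and \mexplorer I would invoke Observations~\ref{obs:search time} and~\ref{obs:start times}. On the north ray a \mguide and a \mexplorer are created together, at cell $(0,d)$ and time $\ftime_d$, exactly when an \explorer completes its level-$d$ rectangle; both thus inherit offset $d-\ftime_d$. Comparing two such events for levels $d>d'$, Observation~\ref{obs:search time} gives $\ftime_d-\ftime_{d'}=(\stime_d-\stime_{d'})+8(d-d')$, and Observation~\ref{obs:start times} bounds $\stime_d-\stime_{d'}\ge d-d'$, whence $\ftime_d-\ftime_{d'}\ge 9(d-d')>d-d'$ and the offsets $d-\ftime_d$, $d'-\ftime_{d'}$ differ; the other rays are analogous. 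Hence no two \mguides and no two \mexplorers ever coincide. (Because the induction is on time, these comparisons only refer to creation events already in the established past, which avoids circularity.)

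It remains to handle the stationary \guide and the crux type \explorer. For \guide I would track, for each north cell $(0,d)$, the interval during which a \guide sits there — from the round a halting \nguide/\mguide occupies it to the round an \explorer pushes it out — and show these intervals are disjoint across agents, using that guides only ever move outward, so the occupied block has monotone endpoints and a vacated cell is never re-entered. The hard part is the \explorer case: several explorers run rectangles in parallel, and Observation~\ref{obs:start times} only guarantees $\stime_d-\stime_{d-1}\ge 1$, so explorers at adjacent levels $d-1,d$ can be simultaneously active for $\Theta(d)$ rounds. I would write an \explorer's position as an explicit function of its level and its phase $t-\stime_d$ along the $8d$-step staircase, note that it touches only cells at distance $d$ and $d+1$ (so levels differing by at least $2$ are automatically disjoint), and rule out the two remaining cases — equal levels, and adjacent levels sharing their common distance — by a phase comparison driven by Observation~\ref{obs:start times}. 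The subtlest point, which I expect to cost the most work, is establishing that each level is \emph{started} by exactly one explorer, making $\stime_d,\ftime_d$ well defined and feeding the argument for \mguides and \mexplorers above; I would derive it from the monotone growth of the guide block together with the just-proved non-collision of \mexplorers.
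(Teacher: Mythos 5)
Your proposal is correct, and at its core it is the same argument as the paper's: both rest on the facts that search teams are emitted at pairwise distinct times and that, by Observations~\ref{obs:search time} and~\ref{obs:start times}, the creation events of \mguides and \mexplorers at distinct levels are separated in time by strictly more than their separation in space (on the north ray, $\ftime_d - \ftime_{d'} \geq 9(d-d')$), so that agents advancing outward at unit speed --- your offset invariant --- can never meet. Where you genuinely diverge is in coverage and in where the work is placed. The paper dispatches the \explorer case with the single clause ``and therefore, neither can two \explorers,'' deducing it from \mexplorer non-collision, whereas you single it out as the hard case and argue it directly via the staircase geometry: levels differing by at least two are spatially disjoint, and for adjacent levels the shared distance-$(d+1)$ cells are visited at phases whose difference is positive in every quadrant, so a collision would force $\stime_{d+1} < \stime_{d}$, contradicting Observation~\ref{obs:start times}; this is a sound and more rigorous filling-in of the paper's ``therefore.'' You also handle the stationary \guide type, which the paper's proof omits entirely, and you explicitly flag that the well-definedness of $\stime_d$ and $\ftime_d$ (a unique explorer per level) must be carried through the induction --- a fact the paper presupposes already in its proof of Observation~\ref{obs:start times}. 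In short, the two proofs share a skeleton; the paper buys brevity by leaving the \explorer and \guide cases and the uniqueness issue to the reader, while your plan makes them explicit, at the price of substantially more work.
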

\begin{proof}
First observe that no two \nguides or \nexplorers can ever be in the same cell
since they are emitted in different rounds and no agent can become \nguide or
\nexplorer again.
Combining Observations~\ref{obs:search time} and \ref{obs:start times} with
the fact that no two \nexplorers are emitted from the origin at the same time,
we conclude that two \mexplorers cannot occupy the same cell at the same time
and therefore, neither can two \explorers.
A similar argument establishes the assertion for \mguides.
\end{proof}

Notice that a \nexplorer and a \mexplorer (or a \nguide and a \mguide) may
share a cell, but this does not affect our algorithm since they are
distinguishable.

For the sake of a clearer run-time analysis, we analyze \RectSearch employing
an ideal emission scheme with emission function
$f_n(t) = t$,
i.e., a new search team is emitted from the origin every round.
We do not know how to implement such a scheme, but in Section~\ref{sec:almost
optimal emission scheme} we will describe an emission scheme with an almost
ideal emission function of
$f_n(t) = \Omega(t - \log n)$.

Consider some cell $c$ in level $d \leq n$.
Observe that under the assumption that $f_{n}(t) = t$, the $d^{\text{th}}$
emitted explorer will start exploring level $d$ at time $2 d$ unless some
previously emitted explorer already did so.
Therefore, cell $c$ is explored in time $\bigO (d)$ as promised.
So, it remains to prove that $c$ is explored in time $\bigO (d^{2} / n)$
for every cell $c$ in level $d > n$. 
The following lemma plays a key role in this regard.

\begin{lemma}
	\label{lem:distance between mexplorers}
Consider an \explorer when it finishes exploring level $d$ and becomes a
\mexplorer at time $\ftime_d$.
Then at that time, every other \mexplorer is at distance at least $d+8$ from
the origin.
\end{lemma}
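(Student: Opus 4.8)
The plan is to compare the explorer $e$ finishing level $d$ with any other agent $e'$ that is a \mexplorer at time $\ftime_d$, and to pin down the position of $e'$ explicitly. First I would record the elementary consequence of combining the two observations: since $\ftime_d = \stime_d + 8d$ by Observation~\ref{obs:search time} and $\stime_d - \stime_{d'} \ge d - d'$ for $d \ge d'$ by Observation~\ref{obs:start times}, one obtains
\[
\ftime_d - \ftime_{d'} \;\ge\; 9\,(d - d') \qquad \text{whenever } d \ge d'.
\]
Every \mexplorer is produced exactly when some explorer finishes a level, so $e'$ has a well-defined most recently finished level $d'$, reached at time $\ftime_{d'} \le \ftime_d$ (it is still a \mexplorer at $\ftime_d$, hence has not yet started its next rectangle search).

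The next step is to show $d' < d$. The value $d' = d$ is impossible, since then both $e$ and $e'$ would occupy the north-guide cell $(0,d)$ at the common finish time $\ftime_d$, contradicting Lemma~\ref{lem:same state different cell}. The value $d' > d$ is excluded by the displayed inequality, as it would force $\ftime_{d'} \ge \ftime_d + 9(d'-d) > \ftime_d$, contradicting $\ftime_{d'} \le \ftime_d$. Hence $d - d' \ge 1$.

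Finally I would locate $e'$. After finishing level $d'$ it becomes a \mexplorer on the north axis at cell $(0, d')$ and, by the \mexplorer rule, walks due north one cell per round without pausing until it reaches a guide-free cell (where it would turn west and become an \explorer again). Since $e'$ is still a \mexplorer at $\ftime_d$, it has been moving north throughout the interval $[\ftime_{d'}, \ftime_d]$ and therefore occupies cell $\bigl(0,\, d' + (\ftime_d - \ftime_{d'})\bigr)$, whose distance from the origin equals its $y$-coordinate. Plugging in the displayed bound yields a distance of at least $d' + 9(d - d') = d + 8(d - d') \ge d + 8$, which is exactly the claim.

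I expect the main obstacle to reside in the last paragraph: justifying that the transiting \mexplorer advances exactly one cell north per round with no idle steps, so that its position is the clean expression above, and fixing the convention for the single round in which ``finishing'' and ``becoming a \mexplorer'' coincide consistently for both $e$ and $e'$ — this is what makes the additive constant come out to exactly $8$ rather than being off by one.
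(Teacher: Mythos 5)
Your proof is correct and takes essentially the same route as the paper's: identify the last level $d'$ explored by the other \mexplorer, show $d' < d$, and combine Observations~\ref{obs:search time} and~\ref{obs:start times} to conclude that since finishing level $d'$ it has walked north for at least $9(d-d')$ rounds, placing it at distance at least $d' + 9(d-d') = d + 8(d-d') \geq d+8$. The only cosmetic difference is how $d' < d$ is established --- the paper argues that at time $\stime_{d}$ agent $e'$ was within distance $d$ of the origin and that larger levels take longer to explore, whereas you rule out $d'=d$ via Lemma~\ref{lem:same state different cell} and $d'>d$ via the finish-time inequality --- and both arguments work.
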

\begin{proof}
Let $e$ be an \explorer that finishes exploring level $d$ and becomes a
\mexplorer at time $\ftime_d$ and consider some other \mexplorer $e'$ at that
time.
Let $d'$ be the last level explored by agent $e'$ prior to time $\ftime_d$.
Since at time $\stime_{d}$, agent $e'$ must have been at distance less than
$d$ from the origin and since every level larger than $d$ takes longer time to
explore than level $d$, it follows that $d' < d$.
Combining Observations~\ref{obs:search time} and \ref{obs:start times}, we
conclude that during the time interval $[\stime_{d'}, \ftime_{d}]$, agent $e'$
managed to complete the exploration of level $d'$ and move
$d - d' + 8 (d - d') \geq d - d' + 8$
steps outwards as a \mexplorer.
\end{proof}

\begin{corollary}
	\label{cor:distance between mexplorers}
The distance between any two \mexplorers is at least $8$.
\end{corollary}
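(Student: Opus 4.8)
The plan is to derive the corollary from Lemma~\ref{lem:distance between mexplorers} by exploiting the fact that every \mexplorer travels along the positive $y$-axis at a constant unit speed. First I would record the geometric observation built into the construction: an \mexplorer always sits on the northward ray $\{(0,k) : k > 0\}$ (it is born at the north \guide and then ``moves north along with its \mguide'' each round until it turns into an \explorer), so it advances exactly one cell north per round while it remains a \mexplorer. Consequently, the Manhattan distance between two \mexplorers equals the difference of their distances from the origin, and under the common northward drift this difference is invariant as long as both agents stay \mexplorers.

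Fix a time $t$ and two \mexplorers $e_1,e_2$ present at $t$. I would anchor the argument at the later of their two most recent birth times: say $e_1$ last became a \mexplorer at time $t_1$ and remained one throughout $[t_1,t]$, with $t_1$ no earlier than the analogous birth time of $e_2$, so that $e_2$ is already (and continuously) a \mexplorer on all of $[t_1,t]$ as well. Writing $d_1$ for the level $e_1$ just finished, it stands at distance $d_1$ from the origin at $t_1=\ftime_{d_1}$, and Lemma~\ref{lem:distance between mexplorers} applied at this moment places $e_2$ at distance at least $d_1+8$. Hence the gap between them is at least $8$ at time $t_1$.

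It then remains to transport this gap forward to time $t$. Since both agents stay \mexplorers on the entire interval $[t_1,t]$, each advances north by the same amount $t-t_1$, so both distances from the origin increase by $t-t_1$ and the gap of at least $8$ is preserved; being collinear on the $y$-axis, their Manhattan distance equals this gap, which gives the claim.

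I expect the only delicate points to be the lock-step movement and the choice of anchor. One must argue that an \mexplorer never pauses or moves at a variable rate, so that the common drift cancels exactly; and choosing the \emph{later} birth time is precisely what guarantees that both agents are continuously \mexplorers on $[t_1,t]$, ruling out the troublesome scenario in which one of them dips into the \explorer state, leaves the axis to run a rectangle search, and re-enters elsewhere. An essentially equivalent and perhaps cleaner formulation is a short induction over rounds on the set of \mexplorer positions on the $y$-axis: the common northward shift preserves all pairwise distances, a removal (an \mexplorer becoming an \explorer) cannot decrease them, and each freshly created \mexplorer is placed at least $8$ below every other one by exactly Lemma~\ref{lem:distance between mexplorers}, so the invariant ``all pairwise distances are at least $8$'' is maintained throughout.
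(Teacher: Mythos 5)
Your proposal is correct and takes essentially the same route as the paper: the paper states this corollary as an immediate consequence of Lemma~\ref{lem:distance between mexplorers}, and your fleshed-out argument---a separation of at least $8$ at the moment each \mexplorer is created, preserved afterwards because all \mexplorers sit on the northward ray and advance in lock-step at unit speed---is precisely the reasoning the paper leaves implicit. Your closing reformulation as a round-by-round invariant (shift preserves distances, removals cannot decrease them, and each new \mexplorer is inserted at least $8$ below all others by the lemma) is a clean way to package the same idea.
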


Let $t_0$ be the first time at which there are no \nexplorers anymore and note
that $t_0 = \bigO(n)$.
Corollary~\ref{cor:distance between mexplorers} implies that once all
\nexplorers are gone, most of the explorers are busy exploring new cells.

\begin{observation}
	\label{obs:constant fraction exploring}
At any time $t > t_0$, at least $7/8$ of the exploring agents are \explorers.
\end{observation}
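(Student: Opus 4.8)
The plan is to fix a time $t > t_0$ and, since the only exploring agents left are \explorers and \mexplorers, reduce the claim to the counting estimate $X \ge 7M$, where $X$ and $M$ denote the current number of \explorers and \mexplorers, respectively (equivalently $M \le \tfrac{1}{8}(X+M)$). The guiding intuition, coming from Observation~\ref{obs:search time}, is that an \explorer spends $8d$ rounds searching level $d$ but only needs to climb (as a \mexplorer) to that level at unit speed; so the ``searching'' phases are long and the ``climbing'' phases are short, and at any fixed instant the short phases should be heavily outnumbered. The difficulty is turning this time-average intuition into a pointwise bound, so I would instead argue geometrically.

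First I would pin down where the \explorers sit. Because $\ftime_d = \stime_d + 8d$ by Observation~\ref{obs:search time} and $\stime_d$ is strictly increasing in $d$ by Observation~\ref{obs:start times}, the finish times $\ftime_d$ are strictly increasing in $d$; hence levels finish in order of their index, and at time $t$ the set of finished levels is a prefix $\{1,\dots,Z\}$ while the set of started levels is a prefix $\{1,\dots,F\}$. Consequently the levels currently being searched are exactly the contiguous band $\{Z+1,\dots,F\}$, and since each such level is searched by a unique \explorer we obtain $X = F - Z$.

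Next I would confine all \mexplorers to a distance band of width $O(X)$. A \mexplorer has not yet reached the frontier, so its distance is at most $F$. For the lower end, consider the innermost \mexplorer, at distance $y_1$; it last finished some level $\ell \le y_1$ at time $\ftime_\ell = t - (y_1 - \ell)$, since it climbs north at unit speed. The computation above also shows that consecutive finish times differ by at least $8$, so in the window $[\ftime_\ell, t]$ of length $y_1 - \ell$ at most $(y_1-\ell)/8 + 1$ levels can finish; as the finishing frontier advanced from $\ell$ to $Z$ during this window, this forces $Z - \ell \le (y_1-\ell)/8 + 1$ and hence $y_1 \ge Z - 8$. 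Thus every \mexplorer lies in $[Z-8,\,F]$, an interval of length $X + 8$. Feeding this into Corollary~\ref{cor:distance between mexplorers}, which spaces \mexplorers at least $8$ apart, yields $M \le (X+8)/8 + 1 = X/8 + O(1)$, whence $X \ge 7M$ for $n$ large enough, establishing the claimed $7/8$ fraction.

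I expect the main obstacle to be precisely this lower bound $y_1 \ge Z - 8$, i.e.\ ruling out a pile-up of \mexplorers at small distances: it is what converts the soft ``climbing is fast'' intuition into a hard geometric confinement, and it relies on coupling the unit-speed climb of a \mexplorer with the $\Omega(1)$ temporal spacing of finish events. A secondary point to check carefully is the bookkeeping that the band $\{Z+1,\dots,F\}$ is in bijection with the \explorers (one \explorer per searched level), together with the handling of the additive constant, which is harmless in the asymptotic regime that the run-time analysis actually uses.
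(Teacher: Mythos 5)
Your proof is correct and takes exactly the route the paper intends: the paper states this observation without an explicit proof, justifying it only by the preceding remark that Corollary~\ref{cor:distance between mexplorers} (the $\geq 8$ spacing of \mexplorers along the north ray) implies it, and your band-confinement argument --- active levels form the contiguous band $\{Z+1,\dots,F\}$, \mexplorers are confined to $[Z-8, F]$, hence $M \le X/8 + \bigO(1)$ --- is a rigorous fleshing-out of that one-line justification. Your qualifier ``for $n$ large enough'' is in fact necessary rather than a weakness: for a single search team the claim fails verbatim whenever its lone exploring agent is a \mexplorer, so the paper's unqualified statement can only be read in the asymptotic sense you make explicit.
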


\begin{corollary}
	\label{cor:constant exploring rate}
At any time $t > t_0$, $\Omega(n)$ new cells are being explored.
\end{corollary}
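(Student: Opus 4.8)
The plan is to combine the constant-fraction bound of Observation~\ref{obs:constant fraction exploring} with the distinctness guarantee of Lemma~\ref{lem:same state different cell}. First I would note that each search team contains exactly one exploring agent, so the number of exploring agents (counting all of \nexplorer, \explorer, and \mexplorer) equals the number of emitted teams, namely $\Theta(n)$ once emission has finished. Since $t > t_0$ and $t_0 = \bigO(n)$, by time $t$ all teams have been emitted and no \nexplorers remain, so every one of these $\Theta(n)$ exploring agents is currently an \explorer or a \mexplorer. Applying Observation~\ref{obs:constant fraction exploring}, at least a $7/8$ fraction of them are \explorers, so at least $\Omega(n)$ agents are \explorers at time $t$.

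The key step is to argue that these $\Omega(n)$ \explorers are exploring $\Omega(n)$ \emph{distinct} cells. By its specification, an \explorer is never dormant (unlike a \guide): it continuously traverses the sides of its rectangle, so at time $t$ it sits on a cell that it is currently exploring. By Lemma~\ref{lem:same state different cell}, no two \explorers ever occupy the same cell simultaneously, so the cells explored by our $\Omega(n)$ \explorers at time $t$ are pairwise distinct, giving the claimed $\Omega(n)$ cells.

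I expect the main subtlety to be the precise accounting of what ``being explored'' counts. Along a diagonal leg, an \explorer exploring level $d$ alternates between cells of level $d$ and cells of level $d+1$, so only a constant fraction of its steps visit a fresh level-$d$ cell; one should therefore read the corollary as an amortized exploration rate of $\Omega(n)$ cells per round rather than a literal per-round count for each \explorer. Either way, the distinctness from Lemma~\ref{lem:same state different cell} prevents the $\Omega(n)$ \explorers from colliding, so the constant fraction supplied by Observation~\ref{obs:constant fraction exploring} carries through and the $\Omega(n)$ bound survives; the only remaining work is fixing the accounting convention.
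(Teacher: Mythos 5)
Your proposal is correct and follows exactly the reasoning the paper intends: the corollary is stated without proof as an immediate consequence of Observation~\ref{obs:constant fraction exploring}, and your fleshing-out --- counting $\Theta(n)$ exploring agents (one per emitted team), applying the $7/8$ fraction, and invoking Lemma~\ref{lem:same state different cell} for distinctness --- is precisely that implicit argument. Your remark about the amortized accounting (an \explorer alternates between level-$d$ and level-$(d{+}1)$ cells on its diagonals, so the per-round rate is off only by a constant factor) is a valid and careful reading of what ``being explored'' means, and it is exactly how the corollary is used in the proof of Theorem~\ref{thm:otpr rectsearch}.
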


\noindent
The main theorem of this section can now be established.

\begin{theorem}
	\label{thm:otpr rectsearch}
Employing an emission scheme with $f_n(t) = t$, \RectSearch locates the
treasure in time $\bigO(D + D^2/n)$.
\end{theorem}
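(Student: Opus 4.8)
The plan is to reduce everything to an upper bound on the finish time $\ftime_D$ of the level that contains the treasure: since the rectangle search of level $D$ visits every cell of level $D$, the treasure cell is visited at some point of the interval $[\stime_D, \ftime_D]$, so it suffices to bound $\ftime_D$. I would then split into the two regimes already isolated in the text. For $D \le n$ the target bound is $\bigO(D + D^2/n) = \bigO(D)$, and this is essentially done: the discussion preceding Lemma~\ref{lem:distance between mexplorers} gives $\stime_D \le 2D$, and Observation~\ref{obs:search time} turns this into $\ftime_D = \stime_D + 8D = \bigO(D)$.

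The work is the regime $D > n$, where $\bigO(D + D^2/n) = \bigO(D^2/n)$, and I would handle it by sandwiching the largest level that has been started by a given time $T$. Write $N(T)$ for the number of cells explored by time $T$ and let $L^\ast$ be the largest level $d$ with $\stime_d \le T$. Observation~\ref{obs:start times} makes the start times strictly increasing in $d$, so the started levels form a prefix $\{1, \dots, L^\ast\}$; moreover a cell of level $d$ can only become explored from time $\stime_d$ onward, so every cell counted by $N(T)$ lies in one of the levels $1, \dots, L^\ast$. Since level $d$ has exactly $4d$ cells, this yields the upper bound $N(T) \le \sum_{d=1}^{L^\ast} 4d \le 4 (L^\ast)^2$. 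In the opposite direction, Corollary~\ref{cor:constant exploring rate} says that $\Omega(n)$ new cells are explored in each round after $t_0$, so $N(T) = \Omega\bigl(n (T - t_0)\bigr)$. Combining the two inequalities gives $L^\ast = \Omega\bigl(\sqrt{n (T - t_0)}\bigr)$.

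It then remains to calibrate $T$. Choosing $T = t_0 + \Theta(D^2 / n)$ makes $L^\ast \ge D$, hence $\stime_D \le T$, and Observation~\ref{obs:search time} gives $\ftime_D = \stime_D + 8D \le T + 8D$. Since $t_0 = \bigO(n)$ and $D > n$, both $t_0$ and $8D$ are $\bigO(D^2/n)$, so $\ftime_D = \bigO(D^2/n)$ and the treasure is found by then; together with the $D \le n$ case this proves the claimed $\bigO(D + D^2/n)$ bound. I expect the crux to be the upper bound $N(T) \le 4(L^\ast)^2$. It relies on two structural facts that must be pinned down: that the started levels really do form a prefix (which follows from the monotonicity in Observation~\ref{obs:start times}), and that no cell is tallied under two levels --- guaranteed by the convention that a level-$d$ cell counts as explored only when visited by the explorer exploring level $d$, so the cells of level $d+1$ swept ``in passing'' during the level-$d$ search are not prematurely counted. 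A secondary point to verify is that the $\Omega(n)$ cells per round from Corollary~\ref{cor:constant exploring rate} are genuinely distinct new cells, so that they accumulate to $\Omega(n(T - t_0))$ over the interval.
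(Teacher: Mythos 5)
Your proposal is correct, and it shares the paper's overall skeleton: the same split into the regimes $D \leq n$ (handled by the discussion preceding the theorem plus Observation~\ref{obs:search time}) and $D > n$, and in the latter regime the same counting principle --- $\Omega(n)$ new cells per round after $t_0$ (Corollary~\ref{cor:constant exploring rate}) played against a quadratic bound on the cells available for exploration. Where you genuinely diverge is in how that quadratic confinement is justified. The paper invokes the structural fact that the \guides in each cardinal direction occupy a contiguous segment of cells, and deduces that no cell in a level larger than $D + n/5 < 2D$ is explored before level $D$ is completed; the budget of $\bigO(D^2)$ cells in the first $2D$ levels is then exhausted by time $t_0 + \bigO(D^2/n)$, at which point level $D$ must be done. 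You bypass the contiguity claim entirely: you bound the explored cells by $4(L^\ast)^2$, where $L^\ast$ is the largest started level, using only the definition of ``explored'' (a level-$d$ cell counts only when visited by the \explorer exploring level $d$) together with the monotonicity of start times (Observation~\ref{obs:start times}); calibrating $T = t_0 + \Theta(D^2/n)$ forces $L^\ast \geq D$, monotonicity then gives $\stime_D \leq T$, and Observation~\ref{obs:search time} yields $\ftime_D \leq T + 8D = \bigO(D^2/n)$. Your route is slightly longer but more self-contained, since the contiguity assertion the paper leans on is stated without proof, whereas each of your ingredients is an explicitly stated observation or definition; the paper's route is shorter and in addition concludes that \emph{all} cells in the first $2D$ levels are explored within the same time bound, not just those of level $D$. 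Both arguments rely in the same (implicit) way on the fact that the $\Omega(n)$ cells per round in Corollary~\ref{cor:constant exploring rate} are distinct across rounds, which you correctly flag as the point that needs to be pinned down.
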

\begin{proof}
The case of $D \leq n$ was already covered, so assume that $D > n$.
Since the \guides occupy a contiguous segment of cells, it follows that no
cell in level larger than $D + n / 5 < 2 D$ is explored before the exploration
of level $D$ is completed.
The number of cells in the first $2 D$ levels is $\bigO (D^{2})$.
The assertion is completed by Corollary~\ref{cor:constant exploring rate}
ensuring that all these cells will be explored by time
$t_{0} + \bigO (D^{2} / n) = \bigO (D^{2} / n)$.
\end{proof}

\section{An Almost Optimal Emission Scheme}
\label{sec:almost optimal emission scheme}

In this section, we introduce the emission scheme \emph{\PSTA} that \whp guarantees an emission function of $f_n(t) = \Omega(t - \log n)$. Plugging \PSTA into the \RectSearch strategy described in the previous section can be viewed as if after time $k \log n$, a new search team is emitted every constant number of rounds. By Theorem~\ref{thm:otpr rectsearch} this results in an almost optimal run-time of $\bigO(D + D^2 /n + \log n)$ \whp.

In Section~\ref{sec:optimal rectangle search} we describe the search strategy \GeomSearch, that when combined with \RectSearch yields an optimal run-time of $\bigO(D + D^2/n)$. \medskip

\noindent The main goal of this section is to establish the following theorem.

\begin{theorem}
	\label{thm: almost optimal}
	Employing the \PSTA emission scheme, \RectSearch locates the treasure in time $\bigO( D + D^2 / n + \log n)$ \whp.
\end{theorem}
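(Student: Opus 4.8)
The plan is to reduce Theorem~\ref{thm: almost optimal} to the already-established Theorem~\ref{thm:otpr rectsearch}, exploiting the fact that the emission scheme enters the run-time analysis of \RectSearch only through \emph{when} teams leave the origin. Every structural guarantee proved in Section~\ref{sec:RectSearch}---Observations~\ref{obs:search time} and~\ref{obs:start times}, the mexplorer-spacing Corollary~\ref{cor:distance between mexplorers}, and hence Observation~\ref{obs:constant fraction exploring} and Corollary~\ref{cor:constant exploring rate}---is a mechanical property of how an explorer traverses its rectangles and is valid for \emph{any} emission scheme satisfying properties (i) and~(ii); \PSTA satisfies both. I would therefore first invoke the analysis of \PSTA carried out in the remainder of this section to fix a high-probability event $\mathcal{E}$ on which the emission function obeys $f_n(t) = \Omega(t - \log n)$, and then argue deterministically conditioned on $\mathcal{E}$. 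Since conditioning on a high-probability event and reasoning deterministically preserves the conclusion, the final time bound will hold \whp.

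On $\mathcal{E}$ I would replay the proof of Theorem~\ref{thm:otpr rectsearch} with $f_n(t)=t$ replaced by $f_n(t)=\Omega(t-\log n)$. For the near regime $D \le n$, the relevant quantity is the emission time of the $d$-th team: inverting the bound $f_n(t)=\Omega(t-\log n)$ shows that $d$ teams have been emitted by time $\bigO(d+\log n)$. Adding the $\bigO(d)$ rounds needed for that team's \nexplorer to walk out and begin exploring level $d$, every cell in a level $d \le D$ is explored within $\bigO(d+\log n)=\bigO(D+\log n)$ rounds---exactly the old bound shifted by the additive $\log n$ absorbed from the delayed start of the emission schedule.

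For the far regime $D>n$, I would first re-establish that $t_0$, the first round with no \nexplorers, still satisfies $t_0=\bigO(n)$: all $n/5$ teams are emitted by time $\bigO(n+\log n)=\bigO(n)$ on $\mathcal{E}$, and the last \nexplorer turns into an \explorer within $\bigO(n)$ further rounds. The exploration-rate statement of Corollary~\ref{cor:constant exploring rate} then applies verbatim, so $\Omega(n)$ new cells are explored in every round after $t_0$. As in Theorem~\ref{thm:otpr rectsearch}, contiguity of the \guides ensures that no level beyond $2D$ is touched before level $D$ is finished, and the $\bigO(D^2)$ cells of the first $2D$ levels are therefore all explored by time $t_0+\bigO(D^2/n)=\bigO(D^2/n)$ (using $n<D$). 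Combining the two regimes yields the claimed $\bigO(D+D^2/n+\log n)$ run-time on $\mathcal{E}$.

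The timing arithmetic above is a routine shift of the earlier analysis, so the main obstacle I anticipate lies in making the conditioning on $\mathcal{E}$ clean. The deterministic argument uses the bound $f_n(t)=\Omega(t-\log n)$ simultaneously at many times $t$, up to the horizon $\bigO(D^2/n)$, so I need the \PSTA analysis to furnish a \emph{single} high-probability event on which the emission function is well-behaved throughout the whole execution, rather than a per-round guarantee that would require a union bound over an unbounded horizon. Verifying that the \PSTA guarantee has this ``once good, always good'' character---and that its failure probability is $n^{-c}$ independently of $D$, so that the $\log n$ term and the high-probability qualifier are stated purely in terms of $n$---is the step that carries the genuine probabilistic content of the theorem.
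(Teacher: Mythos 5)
Your proposal is correct and takes essentially the same route as the paper: the paper derives $f_n(t) = \Omega(t - \log n)$ \whp from its analysis of \PSTA (Lemma~\ref{lemma: ready cells} and Corollary~\ref{cor:parallel leader election}) and then invokes the analysis of Theorem~\ref{thm:otpr rectsearch} with the emission schedule delayed by $\bigO(\log n)$, exactly the reduction you describe. The union-bound worry in your last paragraph resolves even more easily than you anticipate: the emission function is only relevant until all $\lfloor n/5 \rfloor$ teams have left the origin, i.e., up to time $\bigO(n + \log n)$, so the guarantee is needed only for $\bigO(n)$ values of $s$, each failing with probability $\bigO(n^{-k})$ for an arbitrarily large constant $k$, which makes the overall failure probability independent of $D$.
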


Our first goal is to describe the process \FS, where $n$ agents spread out along the east ray $R$ consisting of the cells $(x, 0)$ for $x \in \mathbb N_{>0}$ such that eventually for a prefix of the ray $R$ every cell contains a single agent.

The idea is that in every round each agent tosses a fair coin and moves east with probability $1/2$ or stays put in its current cell otherwise. When an agent senses that it is the only agent in its cell, it marks itself as \textit{ready} and stops moving. To avoid cells being left without any agents, the agents check before moving that not all agents in the corresponding cell decided to move and stays put if that is the case. Furthermore, when an agent walks onto a cell with an agent already marked as ready, it moves one cell further east regardless. We refer to a cell that has once been visited by a ready agent as a ready cell.

\begin{lemma}
	\label{lemma: ready cells}
  For every positive integer $s \leq 6n$ and for every constant $k$, the first $s/6$ cells of the ray $R$ are ready after $s + k \log n$ rounds \whp.
\end{lemma}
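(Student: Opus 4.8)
The plan is to analyze the \FS process as a system of racing coin-flip sequences and show that, with high probability, every cell in the prefix $[1, s/6]$ becomes ready within the claimed time. The core idea is to track, for each cell index $i$, the earliest round at which an agent can legitimately occupy and ``settle'' into cell $i$. Since each agent moves east with probability $1/2$ per round, the position of a single agent after $t$ rounds behaves like a sum of $t$ independent Bernoulli$(1/2)$ variables, so its expected displacement is $t/2$ with sharp concentration. First I would establish that the ``frontier'' of occupied cells advances at rate roughly $1/2$ per round: after $s + k \log n$ rounds, the agents have collectively pushed far enough east that at least $s/6$ distinct cells can each be claimed by some agent, using the factor-of-$6$ slack between the $s/6$ target prefix length and the $s$ rounds available.

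The key steps, in order, are as follows. First I would formalize the settling dynamics: an agent in a shared cell either moves east or stays, with the safeguard that not all agents in a cell leave it empty, and the rule that an agent landing on a ready cell is forced one step further east. I would argue these rules guarantee monotone eastward progress of the ready prefix and that no ready cell ever loses its agent, so ``once ready, stays ready.'' Second, I would set up a concentration argument: fix a cell index $i \leq s/6$ and consider the random time $T_i$ at which cell $i$ first becomes ready. Using that there are $n$ agents and $i$ is small relative to both $n$ and $s$, I would show via a Chernoff bound on the Bernoulli sums governing agent displacements that $\Pr[T_i > s + k\log n]$ is at most $n^{-(c+1)}$ for suitable $k$, exploiting the generous $6\times$ slack to absorb the deviations. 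Third, I would take a union bound over the at most $s/6 \leq n$ relevant cell indices, which costs only a factor of $n$ and hence degrades the failure probability from $n^{-(c+1)}$ to $n^{-c}$, yielding the \whp\ guarantee uniformly over the whole prefix.

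The main obstacle I anticipate is handling the \emph{interaction} between agents rather than treating their trajectories as independent. The forced-displacement rule (an agent bounces one cell further east off a ready cell) and the non-evacuation safeguard (not all agents may leave a cell) introduce correlations and a mild slowdown that a naive independent-random-walk analysis ignores. The hard part will be showing these effects only help or cost a constant factor: intuitively, collisions and bounces push agents \emph{further} east, which accelerates readiness of later cells, while the non-evacuation rule delays a cell's emptying by at most the time for one of its occupants to stay, a lower-order effect. I would control this by a coupling or domination argument, comparing the true process to an idealized one in which agents move independently and are simply allowed to pile up; showing the true frontier dominates (or is dominated within a constant factor by) the idealized frontier reduces the problem to the clean Bernoulli concentration bound. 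Once this domination is in place, the factor-of-$6$ slack in the statement is precisely what provides the headroom needed for the Chernoff tail and the subsequent union bound to close.
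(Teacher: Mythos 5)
Your high-level toolkit (domination to decouple the agents, Chernoff, union bound) matches the paper's, and you correctly identify the interaction between agents as the crux. But there are two genuine gaps. First, your rate accounting and domination direction are off. In the true process an agent sharing a cell with $m-1$ others advances with probability $1/2 - 2^{-m}$ (the move is cancelled exactly when \emph{all} cell-mates' coins say move), which is only $1/4$ when $m = 2$; interactions can only slow agents relative to independent Bernoulli$(1/2)$ walks, so the true process is \emph{dominated by} your idealized rate-$1/2$ process --- the wrong direction for lower-bounding progress. The paper instead constructs an explicitly \emph{slower} process that the true one dominates: a non-ready agent $a_i$ is credited with a move only when its own coin says ``move'' \emph{and} the coin of a designated cell-mate (its cyclic successor $a_{i+1}$) says ``stay,'' an event of probability exactly $1/4$ per round, independent across rounds. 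Chernoff with a $2/3$ deviation factor then gives displacement at least $\tfrac{2}{3} \cdot \tfrac{1}{4}(s + k\log n) \geq s/6$. Note that $\tfrac{2}{3}\cdot\tfrac{1}{4} = \tfrac{1}{6}$: the factor $6$ in the statement is \emph{exactly} what this accounting yields, not generous headroom over rate $1/2$. Your plan of absorbing ``a constant factor'' of slowdown into the slack only works if that constant is at most $2$ and the concentration loss at most $3/2$; this must be verified, since the constants are tight.

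Second, your cell-centric decomposition (bounding $\Pr[T_i > s + k\log n]$ for each cell $i$) does not reduce to a Chernoff bound on agent displacements without an additional structural argument that you leave unstated. A cell becomes ready only when some agent is \emph{alone} in it --- a configuration event, not a displacement event; indeed, ready cells need not even form a prefix at intermediate times (a lone agent can settle ahead of a crowded cell behind it). The paper avoids per-cell readiness times entirely: it shows that \whp every \emph{non-ready} agent is at distance at least $s/6$ by time $s + k\log n$, and then uses the invariants that occupied cells never become empty (non-evacuation plus the fact that ready agents never move) and that any cell behind an advanced agent was once occupied, to conclude that all of the first $s/6$ cells must then contain ready agents (here $s \leq 6n$ guarantees enough agents exist). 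To make your version rigorous you would need this same reduction --- ``cell $i$ not ready at time $t$ implies some non-ready agent lingers at distance at most $i$'' --- at which point your argument collapses into the paper's agent-centric one with a union bound over agents rather than cells.
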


\begin{proof}
Let $X_a$ be the random variable that counts the number of moves a not ready agent $a$ made towards east. Since the probabilities of $a$ moving forward are not independent, we study a weaker process where the number of movements for $a$ is dominated by $X_a$. Let $c$ be the cell occupied by $a$. Note that if $a$ is the only not ready agent occupying $c$, then it moves forward with probability $1$. Let $a_0, \ldots, a_z$ be the set of agents occupying $c$, assume that $z \geq 2$ and let $a = a_i$. In the weaker process, $a_i$ only moves forward if agent $a_{i + 1 \pmod{z}}$ decides to stay put. In other words, the probability of $a$ successfully moving forward is $1 / 4$.

Let $X'_a$ be the random variable that counts the number of moves $a$ made east in the weaker process. Now assume that $a$ is not ready in round $s + k \log n$. Then $E[X'_a] \geq 1/4 \cdot (s + k \log n)$. By applying a Chernoff bound we get that
\[
P(X'_a < 2/3 \cdot E[X'_a]) < P(X'_a < 2/3 \cdot 1/4 \cdot (s + k \log n)) < \bigO(n^{-k}) \enspace .
\]
Since $X_a \geq X'_a$, for any agent $a$ that that is not ready on round $s + k \log n$, the distance to the origin is at least $s/6$ \whp.
\end{proof}

Intuitively, the aforementioned process can be seen as parallel leader election. Since we want to describe an efficient emission scheme, it remains to show how the process can be used to quickly emit search teams consisting of five agents from the origin. To enable the \FS procedure to elect five different kind of agents per search team, we dedicate every fifth cell to a specific kind of agent. As an example, every cell in distance $d \equiv 1 \pmod 5$ is dedicated to an \explorer. After an \explorer is alone in a cell using the \FS procedure described above, it collects its search team in the following manner: it first takes one step east where a leader election for a \guide takes place. If the corresponding \guide cell is occupied by a \guide that is marked ready, they both move outwards to collect the next \guide. Otherwise, the \explorer waits until the leader election is over. After the \explorer (accompanied by the collected \guides) collected all four \guides needed for the search team, the team walks to the origin from where it will then be emitted into the four cardinal directions. We refer to the \FS protocol combined with the collection of the agents as \PSTA.

To prevent two search teams from entering the origin at the same time, we keep track of the innermost search team. In particular, we assign a flag to the agents in the leftmost cell that has not been collected. Every time an agent successfully moves to east by a coin toss in the leader election procedure, the flag is turned off on the respective agent. During the collection phase after all the agents have been collected, the \explorer makes an additional move east to the cell dedicated to elect the next \explorer and turns on the flag for all agents occupying this cell. Then the \explorer turns its flag off and moves back to its \guides. While the \explorer passes flag on, the \guides in its search team wait for its return. An \explorer executes the collection process only if it has the innermost flag activated. The use of the grid by the \PSTA protocol is illustrated in Figure~\ref{fig: psta}.

\begin{figure}	
	\centering
	\includegraphics{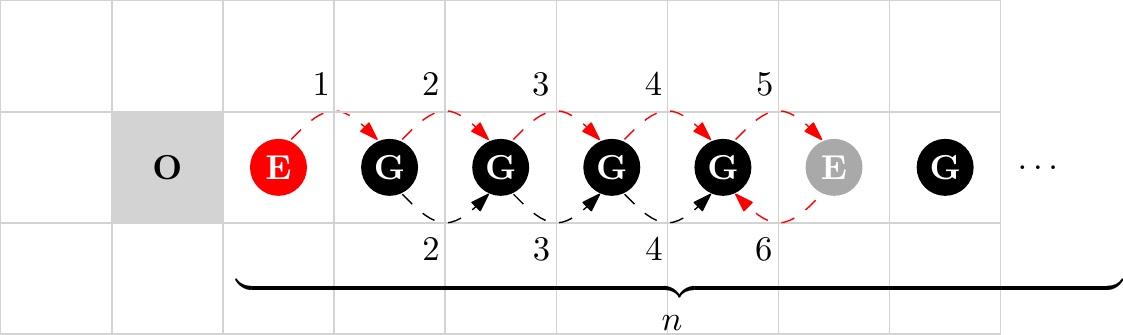}
	\caption{First, the $n$ agents executing the \PSTA protocol form a ray of single agents in cells $(0, 1), \ldots (0, n)$. After the innermost \explorer $e$ in cell $(0, 1)$ (denoted with red circle) is ready, it starts collecting its search team. Assuming that all the agents to join its search team are ready, after five rounds, $e$ enters the cell dedicated to the \explorer $e'$ of the next search team. Then the innermost flag of $e'$ is turned on and in the sixth round $e$ returns to cell $(0, 5)$, where its search team is waiting for it.}
	\label{fig: psta}
\end{figure}

\begin{corollary}
	Assume that $n$ agents start executing \PSTA protocol in round $0$. Then in round $8s + k \log n$, at least $ \lfloor \min\{s, n\}/5 \rfloor$ search teams have entered the origin \whp.
	\label{cor:parallel leader election}
\end{corollary}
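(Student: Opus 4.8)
The plan is to decouple the randomized spreading phase (governed by \FS and Lemma~\ref{lemma: ready cells}) from the purely deterministic collection-and-emission mechanics of \PSTA, and to track each search team individually. Write $t = \lfloor \min\{s,n\}/5 \rfloor$; if $t = 0$ the claim is vacuous, so assume $t \geq 1$. Index the teams by $j = 1, \ldots, t$ via the \explorer cells $(5j-4,0)$, the four \guide cells of team $j$ being $(5j-3,0), \ldots, (5j,0)$. Since teams enter the origin in increasing order of $j$, the goal reduces to showing that team $t$ enters the origin by round $8s + k\log n$ \whp.

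For the probabilistic ingredient, I would apply Lemma~\ref{lemma: ready cells} once for each $j \in \{1,\ldots,t\}$ with parameter $30j$; this is admissible since $30j \leq 30t \leq 6n$, and it gives that the first $5j$ cells of $R$ — in particular all five cells of team $j$ — are ready by round $30j + k\log n$ \whp. A union bound over the $t \leq n$ values of $j$ (costing one power of $n$ in the exponent $c$) makes this hold simultaneously \whp. Because a cell that has become ready stays ready, I may condition on this event and argue deterministically from here on.

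For the deterministic scheduling, let $C_j$ be the round in which team $j$ is collected (its \explorer has gathered all four \guides and switched the innermost flag onto the next \explorer cell). I would prove $C_j \leq 30j + k\log n + c_0$ for a fixed constant $c_0$ by induction on $j$. The collection of one team is an unobstructed, constant-length procedure (six rounds in the figure) during which the flag is turned on at explorer cell $j+1$, so the flag reaches \explorer $j$ by round $C_{j-1}$. As the flag advances five cells per $\bigO(1)$ rounds while the ready frontier advances only one cell per six rounds, the flag outpaces spreading and is never the bottleneck; hence team $j$ is collected at $\max(C_{j-1},\,30j + k\log n) + c_0 = 30j + k\log n + c_0$, closing the induction with a single constant. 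Once collected at distance $5j$, team $j$ marches inward at unit speed, reaching the origin after $5j$ further rounds, i.e.\ by round $35j + k\log n + c_0$; the flag serialization built into \PSTA guarantees that at most one team enters the origin per round and that a marching team stays behind its predecessor, so the inward walk is unobstructed and all entries occur in distinct rounds.

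Taking $j = t$ yields an arrival time of at most $35t + k\log n + c_0 \leq 7\min\{s,n\} + k\log n + c_0 \leq 8s + k\log n$ for all sufficiently large $s$ (small $s$ being absorbed by the floor and the \whp slack), so all $t$ teams have entered the origin by round $8s + k\log n$, as claimed. The main obstacle is the deterministic scheduling step: one must extract from the informal description of \PSTA a precise enough model of the flag handoff to verify that it never stalls, that the inward march of a collected team neither collides with a neighbor nor violates the one-team-per-round entry at the origin, and — the crux — that the ready frontier, not the flag, is the rate-limiting resource, since this is exactly what lets a single constant $c_0$ survive the induction instead of accumulating with $j$.
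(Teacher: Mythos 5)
Your proposal is correct (at the paper's level of rigor), but it is organized quite differently from the paper's own proof. The paper applies Lemma~\ref{lemma: ready cells} exactly \emph{once}, with the global parameter: the first $s$ cells are ready \whp by time $t_0 = 6s + k\log n$, and only after that does it reason about the flag --- from $t_0$ on, the innermost-flagged \explorer moves east in every round, so by $r' = 7s + k\log n$ every \explorer within distance $s$ has been flagged, hence every such team has started walking back, and an inward walk of at most $s$ steps yields $8s + k\log n$. You instead stagger the lemma over teams (parameter $30j$ for team $j$), pay a union bound over the $t \leq n$ applications, and run an induction $C_j \leq 30j + k\log n + c_0$ whose closure hinges on your observation that the flag front (five cells per $\bigO(1)$ rounds) outpaces the ready frontier (one cell per six rounds), so that readiness rather than the flag is the rate-limiting resource. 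The trade-off: the paper's wait-then-sweep argument is shorter --- no union bound, no induction, and no speed comparison is needed, since after global readiness the flag trivially advances one cell per round --- whereas your pipelined accounting is finer-grained, giving that team $j$ individually enters the origin by round $35j + k\log n + \bigO(1)$ rather than only the aggregate statement, and it handles the $\min\{s,n\}$ truncation explicitly where the paper leaves it implicit. One caveat, though not a gap relative to the paper's own standard: your closing inequality $35t + k\log n + c_0 \leq 8s + k\log n$ requires $s \geq c_0$, so for constant $s$ the additive constant must be absorbed somewhere; the paper has the identical slack (its last flagged team still needs $\bigO(1)$ rounds to finish collecting before it starts walking), and both proofs silently sweep this into the high-probability formulation.
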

\begin{proof}
	By Lemma~\ref{lemma: ready cells}, the first $s$ cells are ready \whp at time $t_0 = 6s + k \log n$, which indicates that the agents in these cells are ready to perform their collection process latest at time $t_0$. In addition, in each round after $t_0$, the \explorer flagged as innermost and occupying one of the first $s$ cells moves east. Therefore, latest in round $s + 6s + k \log n = r'$, all \explorers with distance of at most $s$ from the origin have been flagged as innermost. Since the \explorer of the successive search team is flagged  after the previously innermost search team is ready to walk back to the origin, at least $\lfloor s/5 \rfloor$ search teams have started walking at round $r'$. Furthermore, every search team has to walk at most $s$ steps towards the origin and therefore, in round $8s + k \log n$ all of the $\lfloor \min\{s, n\}/5 \rfloor$ search teams, consisting of agents distance at most $s$, have reached the origin \whp.
\end{proof}

By Corollary~\ref{cor:parallel leader election}, the emission function $f_n (t)$ provided by the \PSTA protocol satisfies $f_n (t) = \Omega(t - \log n)$ and therefore, Theorem~\ref{thm: almost optimal} follows.

\section{Optimal Rectangle Search}
\label{sec:optimal rectangle search}

In this section, we will present the search strategy \emph{\HybridSearch} that locates the treasure with optimal run-time of $\bigO(D + D^2/n)$. This is achieved by, combining \RectSearch employing the \PSTA with the randomized search strategy \GeomSearch that is fast only if the treasure is close to the origin. \medskip

The search strategy \GeomSearch is suited to locate the treasure very quickly if it is located close to the origin, more precisely if $D \leq \log (n) / 2$. Initially, each of the $n$ agents chooses uniformly at random one of the four quarter-planes that it will be searching. We will explain the strategy exemplary for an agent ``responsible'' for the north-east quarter-plane. The other three types operate analogously in their respective quarter-plane.

Initially, the agent moves one cell to the east. From then on, it moves a geometrically distributed number of steps east following which it moves a geometrically distributed number of steps to the north. More precisely, with probability $1/2$ the agent moves further and otherwise stops walking in the current direction. Both these processes can be realized in our model by having two state transitions where one of them moves the agent further while the other one ends the current walk. Either of the two transitions is chosen uniformly at random and a walk of geometrically distributed length is obtained.

\begin{lemma}
If $D \leq \log (n) / 2$, then \GeomSearch locates the treasure in time
$\bigO(D)$ \whp.
\end{lemma}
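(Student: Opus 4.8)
The plan is to fix the quarter-plane that contains the treasure and lower-bound the probability that a single agent assigned to that quarter-plane passes through the treasure cell along a trajectory of length $\bigO(D)$; the \whp guarantee then follows by independence across the $n$ agents.

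First I would place the treasure at a cell at distance $D$ from the origin and observe that it lies in (at least) one of the four quarter-planes, say the north-east one, at coordinates $(a,b)$ with $a,b \geq 0$ and $a+b = D$. Consider an agent that selects the north-east quarter-plane, which happens with probability $1/4$. After its initial step east it sits at $(1,0)$; it then performs a geometric east-walk of length $E$ followed by a geometric north-walk of length $N$, where the ``move further / stop'' coin yields $P(E = k) = P(N = k) = 2^{-(k+1)}$ and hence $P(E \geq k) = P(N \geq k) = 2^{-k}$. The agent visits the treasure cell $(a,b)$ exactly when its east-walk stops in column $a$, i.e.\ $E = a-1$, and its north-walk reaches at least row $b$, i.e.\ $N \geq b$ (the boundary case $b=0$ requiring only $E \geq a-1$). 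Since the quarter-plane choice and the two walk lengths are governed by independent coin tosses, the probability of this event is at least
\[
\tfrac14 \cdot 2^{-a} \cdot 2^{-b} = \tfrac14 \, 2^{-D} \, .
\]
Moreover, along such a successful trajectory the agent reaches $(a,b)$ after $1 + (a-1) + b = D$ moves (up to the constant number of rounds spent switching between the walk phases), so a successful agent locates the treasure within $\bigO(D)$ rounds.

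Next I would convert this per-agent probability into a \whp statement. The hypothesis $D \leq \log(n)/2$ gives $2^{-D} \geq n^{-1/2}$, so each agent independently visits the treasure cell with probability at least $p := \tfrac14 n^{-1/2}$. As the $n$ agents toss their coins independently, the probability that none of them visits the treasure cell is at most
\[
(1-p)^{n} \leq e^{-pn} = e^{-\frac14 \sqrt{n}} \, .
\]
Because $e^{-\frac14 \sqrt{n}}$ decays faster than any inverse polynomial in $n$, this is at most $n^{-c}$ for every constant $c$ and all sufficiently large $n$; since every successful agent reaches the treasure within $\bigO(D)$ rounds, \GeomSearch locates the treasure in time $\bigO(D)$ \whp.

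I expect the main obstacle to be the bookkeeping around the geometric walks rather than the probabilistic core: I must pin down the exact parametrisation of the geometric distribution induced by the ``move further / stop'' coin, treat the boundary cells on the axes (where the relevant quarter-plane and the ``visit'' event degenerate slightly), and confirm that the constant number of rounds lost when switching walk phases does not spoil the $\bigO(D)$ time bound. Once the per-agent success probability is correctly shown to be $\Omega(2^{-D})$, the condition $D \leq \log(n)/2$ is precisely what makes it $\Omega(n^{-1/2})$, so that $n$ independent trials push the failure probability below every inverse polynomial.
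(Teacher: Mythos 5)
Your proposal is correct, and its overall skeleton matches the paper's: fix the treasure cell, lower-bound the probability that a single agent's L-shaped walk reaches it, then use independence of the $n$ agents to push the failure probability below every inverse polynomial. The key step, however, is computed differently. The paper works with the total walk length $X_a = E + N$, which is negative binomial with $P(X_a = k) = (k+1)2^{-(k+2)}$, bounds $P(X_a = d-1) \geq 2^{-d-1} = \Omega(1/\sqrt{n})$, and then invokes symmetry (the endpoint is uniform over the cells at distance $d$) to divide by the $\bigO(\log n)$ cells in that level, yielding a per-agent, per-cell probability of $\Omega\bigl(\frac{1}{\sqrt{n}\log n}\bigr)$ and a failure bound of $e^{-\Omega(\sqrt{n}/\log n)}$. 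You instead compute the visit probability for the specific cell $(a,b)$ directly, as $P(E = a-1)\cdot P(N \geq b) = 2^{-a}\cdot 2^{-b} = 2^{-D} \geq n^{-1/2}$, which avoids the $\log n$ division entirely and gives the slightly sharper failure bound $e^{-\sqrt{n}/4}$; you also count visits ``in passing'' (north-walk reaching \emph{at least} row $b$) rather than only walk endpoints, which is what makes the direct computation clean. Both arguments are valid; yours is marginally tighter and more elementary. One detail you flag but do not fully resolve: a cell on an axis, say $(0,D)$, is never visited by a north-east agent, since that agent's first move is east and its trajectory keeps $x \geq 1$; this is not a real gap, because the four agent types cover the four half-open rotations of $\{(x,y) : x \geq 1,\, y \geq 0\}$, which partition $\mathbb{Z}^2 \setminus \{(0,0)\}$, so every cell at distance $D \geq 1$ is handled by exactly one type after relabeling --- a bookkeeping point the paper's own proof also glosses over.
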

\begin{proof}
Consider some cell $c$ at distance $d \leq \log (n) / 2$ from the origin and
fix some agent $a$.
Let $X_a$ be a random variable that captures the length of the walk of agent
$a$ and observe that $X_a$ obeys a negative binomial distribution so that
\[
P(X_a = k) = (k+1) \cdot 2^{-(k+2)} \enspace .
\]
Recalling that $a$ has already moved one step, we conclude that the
probability that $a$ moves up to distance $d$ is
\[
P(X_a = d-1) = d \cdot 2^{-d-1} \geq 2^{-d-1} = \Omega (1 / \sqrt{n}) \enspace .
\]
Since all cells at distance $d$ from the root have the same probability of
being explored by $a$ and since there are $\bigO (\log n)$ such cells, it
follows that $a$ explores cell $c$ with probability at least
$\Omega \left( \frac{1}{\sqrt{n} \log n} \right)$.
Therefore, the probability that none of the agents explores cell $c$ is at most
\[
\left( 1 - \Omega \left( \frac{1}{\sqrt{n} \log n} \right) \right)^{n} <
e^{-\Omega \left( \frac{\sqrt{n}}{\log n} \right)} \, .
\]
The assertion follows.
\end{proof}

We can now combine the two search strategies \GeomSearch, which is optimal for $D \leq \log(n)/2$, and \RectSearch employing \PSTA, which is optimal for $D = \Omega(\log n)$, into the \HybridSearch strategy as follows.

At the beginning of the execution, each agent tosses a fair coin to decide whether it participates in \RectSearch or \GeomSearch. Let $n_r$ and $n_g$ be the number of agents participating in \RectSearch and \GeomSearch, respectively and observe that $n_r,n_g \geq n/3$ \whp Then the agents enter according states so that they do not interfere with each other anymore. One group executes \GeomSearch and locates the treasure \whp in time $\bigO(D)$ if $D \leq \log(n)/2$ and the other group executes \RectSearch locates the treasure \whp in time $\bigO(D+D^2/n)$ if $D = \Omega(\log n)$, thereby establishing the main theorem.

\begin{theorem}
	\HybridSearch locates the treasure in time $\bigO(D+D^2/n)$ \whp.
\end{theorem}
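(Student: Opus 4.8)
The plan is to verify that the random partition of the agents into the two strategies preserves, up to constant factors, the performance guarantees established for each strategy individually, and then to combine the two regimes by a simple case analysis on $D$.

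First I would analyze the coin toss that assigns each agent to \RectSearch or \GeomSearch. Since each of the $n$ agents independently joins \RectSearch with probability $1/2$, a Chernoff bound shows that $n_r \geq n/3$ and symmetrically $n_g \geq n/3$, each with probability $1 - e^{-\Omega(n)}$, hence \whp. The two groups use disjoint sets of states, so the sensing performed by an agent of one group never triggers a transition that reacts to a state of the other group; this is exactly the non-interference property needed to invoke the earlier analyses for each group running in isolation on the shared grid.

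Next I would split into two cases according to the distance $D$. If $D \leq \log(n)/2$, I invoke the lemma bounding the run-time of \GeomSearch. The per-agent per-cell success probability derived there is $\Omega(1/(\sqrt{n}\log n))$ and is controlled purely by $D$ and the total $n$; running the strategy with only $n_g \geq n/3$ agents therefore still yields a failure probability of at most $(1 - \Omega(1/(\sqrt{n}\log n)))^{n/3} < e^{-\Omega(\sqrt{n}/\log n)}$, so \GeomSearch locates the treasure in time $\bigO(D)$ \whp. If instead $D > \log(n)/2$, I invoke Theorem~\ref{thm: almost optimal} for the \RectSearch group: with $n_r \geq n/3$ agents its run-time is $\bigO(D + D^2/n_r + \log n) = \bigO(D + D^2/n + \log n)$, and since $D > \log(n)/2$ gives $\log n < 2D$, the additive $\log n$ term is absorbed into $\bigO(D)$, leaving the desired $\bigO(D + D^2/n)$. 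These two cases exhaust all values of $D$, so in either case the treasure is found within $\bigO(D + D^2/n)$ steps \whp.

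The proof is completed by a union bound over the three failure events — the partition being too unbalanced, the \GeomSearch group failing in the small-$D$ regime, and the \RectSearch group failing in the large-$D$ regime — each of which has probability $n^{-\Omega(1)}$. The main subtlety, and the step I would be most careful about, is confirming that shrinking each group from $n$ to a constant fraction $\Theta(n)$ of the agents leaves both performance bounds asymptotically unchanged: for \GeomSearch this hinges on the success probability being governed by $D$ and the global $n$ rather than by $n_g$, and for \RectSearch on the facts that $D^2/n_r = \Theta(D^2/n)$ and that the $+\log n$ slack is harmless precisely in the regime $D = \Omega(\log n)$ in which \RectSearch is used.
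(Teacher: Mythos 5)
Your proposal is correct and takes essentially the same route as the paper: a fair coin toss splits the agents into two non-interfering groups of size at least $n/3$ \whp, and a case analysis at the threshold $D \approx \log(n)/2$ invokes the \GeomSearch lemma in the small-$D$ regime and Theorem~\ref{thm: almost optimal} in the large-$D$ regime, where the additive $\log n$ term is absorbed into $\bigO(D)$. You in fact supply details the paper leaves implicit --- notably that both bounds survive shrinking each group to a constant fraction of $n$, since the \GeomSearch success probability is governed by $D$ and the global $n$ while $D^2/n_r = \Theta(D^2/n)$ --- so your argument is a faithful, more careful rendering of the paper's proof.
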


\section{Conclusions.}

In this paper, we establish a tight $\bigO (D + D^{2} / n)$ upper bound on the
time it takes for $n$ FSM-controlled agents to locate a treasure hidden at
distance $D$ from the origin.
Combined with the lower bound of Feinerman et al., our result demonstrates
that by allowing the agents to use a very primitive mean of communication, one
can get rid of the requirement for a super-constant memory and an
approximation of $n$.
This last observation may come as good news to anyone interested in studying
the communication between collaborating entities.

The aforementioned upper bound is based on a randomized algorithm that locates
the treasure in finite time with probability $1$ (i.e., a Las Vegas algorithm)
and in time $\bigO (D + D^{2} / n)$ \whp.
It is not difficult to extend our analysis showing that the run-time of our
algorithm holds also in expectation.

As mentioned in Section~\ref{section:Model}, we assume that the agents operate
in a synchronous environment.
Given the natural motivation for our model (studying ant colonies), it may be
desirable to lift this assumption.
Although this issue is beyond the scope of the current extended abstract, note
that our algorithm can be adapted to work in a fully asynchronous environment
as long as the model is changed so that an agent's communication range is
extended to include its neighboring cells on top of its own cell.
Whether the ANTS problem can be solved within the same asymptotic time bounds
in an asynchronous environment without extending the communication range is an
interesting open question.

One may wonder if locating the treasure is indeed the ``right'' goal for
agents whose navigation capabilities are so weak (cannot even store their own
coordinates):
Can an agent locating the treasure find its way back to the origin?
Can we guarantee that all agents eventually return to the origin?
To that end, notice that our algorithm can be modified to ensure a positive
answer for these two questions.
In fact, we can guarantee that the treasure finder returns to the origin in time
$\bigO (D + D^{2} / n)$
\whp and that all agents return to the origin in time
$\bigO (D + D^{2} / n + \log n)$
\whp. Getting rid of the extra logarithmic term in the latter bound seems to be an
interesting challenge.
In any case, the treatment of these secondary goals is deferred to the full
version.

Finally, it is important to point out that our algorithm is not inspired by
any observations regarding the real behavior of ants.
(In fact, we will be surprised if the \RectSearch strategy that lies at
the heart of our algorithm fits an exploration pattern used by real ants.)
As such, we do not claim that our results explain any natural phenomenon, but
rather attempt to advance the understanding of the power and limitations of a
basic nature-inspired model.

\clearpage
\renewcommand{\thepage}{}

\bibliographystyle{abbrv}
\bibliography{references}

\end{document}